\newtheorem{thm}{Theorem}[subsection]
\newcommand{\field}[1]{\mathbb{#1}}
\newtheorem{theorem}[thm]{Theorem}
\newtheorem{proposition}[thm]{Proposition}
\newtheorem{lemma}[thm]{Lemma}
\newtheorem{remark}[thm]{Remark}
\DeclareMathOperator{\tr}{tr}
\DeclareMathOperator{\str}{str}
\DeclareMathOperator{\sdet}{sdet}
\title{\textbf{A noncommutative discrete potential KdV lift}}
\author[1,2]{S. Konstantinou-Rizos\thanks{skonstantin84@gmail.com, skonstantin@chesu.ru}}
\author[3]{T.  E. Kouloukas\thanks{T.E.Kouloukas@kent.ac.uk}}
\affil[1]{Institute of Mathematical Physics and Seismodynamics, Chechen State 
University, Russia}
\affil[2]{Faculty of Mathematics and Computer Technology, Chechen State 
University, Russia}
\affil[3]{School of Mathematics, Statistics \& Actuarial Science, University of Kent, UK}
\begin{document}

\maketitle

\begin{abstract}
In this paper, we construct a Grassmann extension of a Yang-Baxter map which first appeared in \cite{kouloukas} and can be considered as a lift of the discrete potential Korteweg-de Vries (dpKdV) 
equation. This noncommutative extension satisfies the Yang-Baxter equation, and it admits a $3 \times 3$ Lax matrix. Moreover, we show that it can be squeezed down to a system of lattice equations which possesses a Lax representation and whose bosonic limit is the dpKdV equation. Finally, we consider commutative analogues of the constructed Yang-Baxter map and its associated quad-graph system, and we discuss their integrability.
\end{abstract}

\hspace{.2cm} PACS numbers: 02.30.Ik 

\hspace{.2cm} Mathematics Subject Classification: 15A75, 35Q53, 39A14, 81R12.

\hspace{.2cm} Keywords: Noncommutative dpKdV, Grassmann extensions of Yang-Baxter maps, 

\hspace{.2cm} Grassmann algebras, Grassmann extensions of discrete integrable systems.

\section{Introduction}
Noncommutative extensions of integrable systems have been of extensive interest over the last few decades. Recently, in \cite{Georgi-Sasha}, a method was presented for constructing super (Grasmmann-extended) differential-difference and difference-difference systems via noncommutative extensions of Darboux transformations. This motivated further study of discrete integrable systems on Grassmann algebras \cite{Xue-Levi-Liu, Xue-Liu, Xue-Liu-2}, as well as Grassmann extensions of Yang-Baxter maps \cite{GKM, Sokor-Sasha-2}.

The theory of Yang-Baxter maps, namely set-theoretical solutions of the Yang-Baxter equation, has a fundamental role in the theory of integrable systems. The study of such solutions was formally proposed by Drinfel'd in \cite{Drinfel'd}. Yang-Baxter maps and several connections with discrete integrable systems were extensively studied over the past couple of decades (indicatively we refer to \cite{ABS-2005, Buchstaber, Etingof, Pavlos, Pap-Tongas-Veselov, Sklyanin, Veselov, Veselov3}). Lax representations \cite{Veselov2} of Yang-Baxter maps are of particular interest, since they are associated with refactorization problems of polynomial matrices, invariant spectral curves of transfer maps \cite{Veselov, Veselov3}, $r$-matrix Poisson structures \cite{kouloukas, Kouloukas2}, and they also possess a natural connection with integrable partial differential equations through Darboux transformations  \cite{Sokor-Sasha, miky}. Recently, a ``lifting'' method for constructing two-field equations, which can be viewed as Yang-Baxter maps, via equations on quad-graphs, was presented in \cite{Pap-Tongas}.

 The relation between the Yang-Baxter property of maps and the multi-dimensional consistency property of integrable equations on quad-graphs is well established. 
 The main approach is based on the symmetry analysis of the corresponding equations \cite{Pap-Tongas0,Pap-Tongas-Veselov}.
 From another point of view, under some conditions 3D consistent equations can be lifted to four dimensional Yang-Baxter maps. This lifting procedure has been described in \cite{Pap-Tongas}.  
Conversely, Yang-Baxter maps of a specific form can be squeezed down to lattice equations and systems that admit a Lax representation derived from the Lax matrix of the corresponding Yang-Baxter map. Motivated by these results and working in the direction of extending the theory of Yang-Baxter maps to the noncommutative case, in this paper, we construct a Grassmann extension of a Yang-Baxter map, first derived in \cite{kouloukas}, and we show that it possesses the Yang-Baxter property. Furthermore, we use this map to construct a noncommutative extension of the well-celebrated dpKdV equation \cite{Hirota-KdV, Frank, PNC}, also referred to as $H_1$ equation in the ABS list \cite{ABS-2004}. Both the Yang-Baxter map and its associated quad-graph system, which we construct, admit Lax representations with $3\times 3$ Lax matrices and their bosonic limit leads to the original discrete systems.

Finally, we consider all the variables of the constructed $3\times 3$ Lax matrices to be commutative, and we derive a higher-dimensional commutative extension of the original Yang-Baxter  map. This extension constitutes a symplectic map with respect to the Sklyanin bracket, and completely integrable in the Liouville sense.  As in the noncommutative case, this map can be squeezed down to a corresponding  integrable quad-graph system. 

The paper is organised as follows. In the next section, we begin with some preliminaries on Grassmann algebras, Yang-Baxter maps, quadrilateral equations and Lax representations. 
In section \ref{dpKdV-lift}, we present the construction of the Grassmann extension of the lift of the dpKdV equation, and we show that it satisfies the Yang-Baxter equation. In section \ref{lift-down}, we show that the constructed Yang-Baxter map can be reduced to a discrete quad-graph system, which constitutes a noncommutative extension of the dpKdV equation.  The Lax representation of this system is derived from the corresponding Lax matrix of the Yang-Baxter map. Section \ref{commutative} deals with commutative analogues of the systems presented in sections \ref{dpKdV-lift} and \ref{lift-down}. Finally, in section \ref{conclusions}, we close with some remarks and perspectives for future work.

\section{Preliminaries}\label{Preliminaries}
\subsection{Grassmann algebra}
Here, we give all the definitions related to Grassmann algebras that are essential for this paper. However, for more information on Grassmann analysis one can consult \cite{Berezin}.

Consider $G$ to be a $\field{Z}_2$-graded algebra over $\field{C}$. Thus, $G$, as a linear space, is a direct sum $G=G_0\oplus G_1$ (mod 2), such that $G_iG_j\subseteq G_{i+j}$. The elements of $G$ that belong either to $G_0$ or to $G_1$ are called \textit{homogeneous}, the ones in $G_1$ are called \textit{odd} (or {\sl fermionic}), while those in $G_0$ are called \textit{even} (or {\sl bosonic}).

The parity $|a|$ of an even homogeneous element $a$ is $0$, while it is $1$ for odd homogeneous elements, by definition. The parity of the product $|ab|$ of two homogeneous elements is a sum of their parities: $|ab|=|a|+|b|$. Now, for any homogeneous elements $a$ and $b$, Grassmann commutativity means that $ba=(-1)^{|a||b|}ab$ . This implies that if $\alpha\in G_1$, then $\alpha^2=0$, and 
$\alpha a=a \alpha$, for any $a\in G_0$. 

The notions of the determinant and the trace of a matrix in $G$ are defined for square matrices, $M$, of the following block-form\footnote{Note that the block matrices are not necessarily square matrices.}
\begin{equation*}
M=\left(
\begin{matrix}
 P & \Pi \\
 \Lambda & L
\end{matrix}\right).
\end{equation*}
The blocks $P$ and $L$ are matrices with even entries, while $\Pi$ and $\Lambda$ possess only odd entries. In particular, the \textit{superdeterminant} of $M$, which is usually denoted by $\sdet(M)$, is defined to be the following quantity
\begin{equation*}
\sdet(M)=\det(P-\Pi L^{-1}\Lambda)\det(L^{-1})=\det(P^{-1})\det(L-\Lambda P^{-1}\Pi),
\end{equation*}
where $\det(\cdot)$ is the usual determinant of a matrix, while the \textit{supertrace}, which is usually denoted by $\str(M)$, is defined as 
\begin{equation*}
\str(M)=\tr (P)-\tr (L),
\end{equation*}
where  $\tr(\cdot)$ is the usual trace of a matrix.

Henceforth, we adopt the following notation: We denote all even variables in $G_0$ by Latin letters, whereas for odd variables in $G_1$ we use Greek letters.
\subsection{Yang-Baxter equation and Lax representations in the Grassmann case}
Let $S: V_G\times V_G \rightarrow V_G\times V_G$ be a map
\begin{equation}\label{Smap}
((x,\chi),(y,\psi))\stackrel{S}{\mapsto}\left((u,\xi),(v,\eta)\right),
\end{equation}
where $V_G=\{(a,\alpha)\,|\, a\in G_0,\  \alpha\in G_1\}$. Map \eqref{Smap} is said to be a \textit{Grassmann extended} Yang-Baxter map, if it satisfies the 
\textit{Yang-Baxter equation}:
\begin{equation*}
S^{12}\circ S^{13} \circ S^{23}=S^{23}\circ S^{13} \circ S^{12},
\end{equation*}
where the maps $S^{ij}: V_G\times V_G\times V_G \rightarrow V_G\times V_G\times V_G$, $i,j=1,2,3$, $i\neq j$, are defined by the following relations
\begin{equation*}
S^{12}=S\times id, \quad S^{23}=id\times S \quad \text{and} \quad S^{13}=\pi^{12} S^{23} \pi^{12},
\end{equation*}
where $\pi^{12}$ is the involution defined by $\pi^{12}((x,\chi),(y,\psi),(z,\zeta))=((y,\psi),(x,\chi),(z,\zeta))$. It is obvious from the above definition that definitions of Yang-Baxter maps in the Grassmann and commutative cases coincide. The only difference is the set of the objects of the maps. Additionally, map \eqref{Smap} is called \textit{reversible} if $S^{21}\circ S^{12}=id$; map $S^{21}$ is defined as $S^{21}=\pi^{12}S^{12}\pi^{12}$.

Furthermore, if two parameters $a,b\in G_0$ are involved in the definition of \eqref{Smap}, namely we have a map
\begin{equation}\label{SPmap}
S_{a,b}: ((x,\chi),(y,\psi))\mapsto\left((u,\xi),(v,\eta)\right),
\end{equation}
satisfying the \textit{parametric Yang-Baxter equation}
\begin{equation}\label{SYB_eq1}
S^{12}_{a,b}\circ S^{13}_{a,c} \circ S^{23}_{b,c}=S^{23}_{b,c}\circ S^{13}_{a,c} \circ S^{12}_{a,b},
\end{equation}
we shall be using the term \textit{Grassmann extended parametric Yang-Baxter map}.

According to \cite{Veselov2}, a \textit{Lax matrix} of the parametric Yang-Baxter map (\ref{SPmap}), is a matrix $\mathcal{L}$, with Grassmann-valued entries in this case, depending on the point $(x,\chi) \in V_G$, a parameter $a$ and a spectral parameter $\lambda$, such that 
\begin{equation}\label{eqLax}
\mathcal{L}_a(u,\xi;\lambda)\mathcal{L}_b(v,\eta;\lambda)=\mathcal{L}_b(y,\psi;\lambda)\mathcal{L}_a(x,\chi;\lambda).
\end{equation}
The refactorization equation (\ref{eqLax}) does not always admit a unique solution with respect to $u,\xi,v,\eta$. In fact, if (\ref{eqLax}) is equivalent to $\left((u,\xi),(v,\eta)\right)=S_{a,b}((x,\chi),(y,\psi))$, then the Lax matrix $\mathcal{L}$ is said to be \textit{strong} \cite{Kouloukas2}.  
Moreover, general solutions of (\ref{eqLax}) are not always Yang-Baxter maps. 
For the Yang-Baxter property one may use the following \textit{trifactorisation criterion}: If the following matrix refactorisation problem
\begin{align*}
\mathcal{L}_a(u,\xi;&\lambda)\mathcal{L}_b(v,\eta;\lambda)\mathcal{L}_c(w,\gamma;\lambda)=\mathcal{L}_a(x,\chi;\lambda)\mathcal{L}_b(y,\psi;\lambda)\mathcal{L}_c(z,\zeta;\lambda),
\intertext{implies}
&(u,\xi)=(x,\chi), \quad (v,\eta)=(y,\psi)\quad \text{and} ~~ (w,\gamma)=(z,\zeta),
\end{align*}
then map \eqref{SPmap} is a Yang-Baxter map \cite{kouloukas, Veselov}.

Similarly to the commutative case, the quantity $\str(\mathcal{L}_b(y,\psi;\lambda)\mathcal{L}_a(x,\chi;\lambda))$ constitutes a generating function of invariants for map $S_{a,b}${\footnote{Similarly, the invariants of the transfer maps \cite{Veselov} are derived from the supertrace of the corresponding monodromy matrix.}. This can be verified by applying the supertrace to both parts of the Lax equation \eqref{eqLax}.

\subsection{Integrable equations on quad-graphs} 
Let $w$ be a function of two discrete variables $n$ and $m$. Let also $\mathcal{S}$ and $\mathcal{T}$ be the shift operators in the $n$ and $m$ direction of a two-dimensional lattice, respectively. We shall be using the notation: $w_{00}\equiv w$, $w_{ij}=S^iT^j f$; for example, $w_{10}=w(n+1,m)$, $w_{01}=w(n,m+1)$ and $w_{11}=w(n+1,m+1)$. 

Next, we consider equations defined at the vertices of an elementary quadrilateral (see Figure \ref{elementarySQR}), 
\begin{equation}\label{Q-eq}
Q(w,w_{10},w_{01},w_{11};a,b)=0.
\end{equation}

\begin{figure}[ht]
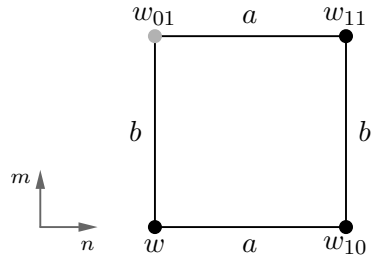

\centering
\centertexdraw{ 
\setunitscale 0.5
\move (-2.4 0)  \lvec(-.4 0) \lvec(-.4 2) \lvec(-2.4 2) \lvec(-2.4 0)
\move (-2.4 0) \fcir f:0.0 r:0.075 \move (-.4 0) \fcir f:0.0 r:0.075
\move (-.4 2) \fcir f:.0 r:0.075 \move (-2.4 2) \fcir f:0.7 r:0.075
\textref h:C v:C \htext(-2.4 -0.2){$w$}
\textref h:C v:C \htext(-.4 -0.2){$w_{10}$}
\textref h:C v:C \htext(-2.4 2.2){$w_{01}$}
\textref h:C v:C \htext(-.4 2.2){$w_{11}$}
\textref h:C v:C \htext(-1.4 -0.2){$a$}
\textref h:C v:C \htext(-1.4 2.2){$a$}
\textref h:C v:C \htext(-2.6 1){$b$}
\textref h:C v:C \htext(-0.2 1){$b$}

\arrowheadsize l:.20 w:.10
\move(-3.6 0) \linewd 0.02 \setgray 0.4 \arrowheadtype t:F \avec(-3 0)
\move(-3.6 0) \linewd 0.02 \setgray 0.4 \arrowheadtype t:F \avec(-3.6 .6)
\htext (-3.1 -.2) {\scriptsize{$n$}}
\htext (-3.8 .5) {\scriptsize{$m$}}
}
\caption{Quad-Graph.}\label{elementarySQR}
\end{figure}
A Lax representation of (\ref{Q-eq}) is an equation 
\begin{equation} \label{Laxintro}
{L}(w_{01},w_{11},a){M}(w,w_{01},b)={M}(w_{10},w_{11},b) {L}(w,w_{10},a),
\end{equation}
for a pair of matrices $L,M$, equivalent to (\ref{Q-eq}). In many cases, as in the one we focus in this paper, $L=M$.  
The three dimensional consistency property is a sufficient (but not necessary) condition for a quad-graph equation to possess a Lax representation 
\cite{Bobenko-Suris,Frank4}. 
In the case of a lift of a quad-graph equation (or system of equations) to a Yang-Baxter map, as presented in \cite{Pap-Tongas}, it can be proven that the Lax matrix of the Yang-Baxter map coincides with the Lax matrix (for $L=M$ in (\ref{Laxintro})) of the corresponding equation \cite{Kouloukas-Dihn}. In many cases, as in the case we investigate in this paper, the converse also holds. That is, the Lax matrix of the initial equation is also a Lax matrix of its lift.  
 As in the case of Yang-Baxter maps, we can consider quadrilateral equations and Lax representations defined on Grassmann variables.

\section{Noncommutative dpKdV lift}\label{dpKdV-lift}
In \cite{kouloukas} it was shown that the following symplectic Yang-Baxter map
\begin{equation}\label{YBliftKdV}
(x_1,x_2,y_1,y_2)\stackrel{Y_{a,b}}{\longmapsto}(u_1,u_2,v_1,v_2)=\left(y_1+\frac{a-b}{x_1-y_2},y_2,x_1,x_2+\frac{a-b}{x_1-y_2}\right),
\end{equation}
can be squeezed down to the well-celebrated dpKdV equation 
\begin{equation}\label{pkdv}
(f_{01}-f_{10})(f-f_{11})=a-b.
\end{equation}
This result was based on the simple observation that $y_1=x_2$ implies $u_1=v_2$.  In particular, if we set $x_2=y_1=f_{10}$, $v_1 =x_1 =f$, $u_2 = y_2 = f_{11}$ 
then the equation $u_1 = v_2 = f_{01}$ is equivalent to (\ref{pkdv}). 
This map was derived as a limit of a non-degenerate Yang-Baxter map, and it satisfies the Lax equation  
 $L_a(u_1,u_2)L_b(v_1,v_2)=L_b(y_1,y_2)L_a(x_1,x_2)$, where 
\begin{equation*}
L_a(\textbf{x}):=
\left(
\begin{matrix}
 x_1 & a+x_1x_2-\lambda \\
 -1 & -x_2
\end{matrix}\right), \quad \textbf{x}:=(x_1,x_2). 
\end{equation*}
At the same time, $L_a$ is a Lax matrix of the dpKdV equation, namely equation ({\ref{pkdv}) is equivalent to 
$$L_a(f_{01},f_{11})L_b(f,f_{01})=L_b(f_{10},f_{11})L_a(f,f_{10}).$$

In general, the Lax matrices of Yang-Baxter maps and quadrilateral equations are not unique. For instance, it is obvious that, the following matrix 
\begin{equation*}
\tilde{L}_a(\textbf{x}):=
\left(
\begin{matrix}
 x_1 & a+x_1x_2-\lambda & 0 \\
 -1 & -x_2 & 0\\
  0 & 0 & 1
\end{matrix}\right),\ \quad \textbf{x}:=(x_1,x_2), 
\end{equation*}
is also a Lax matrix for both the Yang-Baxter map (\ref{YBliftKdV}) and the dpKdV equation. 

Our basic aim is to extend the Yang-Baxter map (\ref{YBliftKdV}) and, consequently, the dpKdV equation to a Grassmann case, by adding two more 
even variables $\chi_1$ and $\chi_2$ as entries of the Lax matrix. Thus, following \cite{GKM, Georgi-Sasha, Sokor-Sasha-2}, we consider the matrix 
\begin{equation}\label{G-laxYBKdV}
\mathcal{L}_a(\textbf{x},\pmb{\chi}):=
\left(
\begin{matrix}
 x_1 & a+x_1x_2+\chi_1\chi_2-\lambda & \chi_1 \\
 -1 & -x_2 & 0\\
  0 & \chi_2 & 1
\end{matrix}\right),\quad (\textbf{x},\pmb{\chi}):=(x_1,x_2,\chi_1,\chi_2),
\end{equation}
where $x_i\in G_0$, $\chi_i\in G_1$, $i=1,2$. This matrix satisfies the following two basic properties. Its bosonic limit is the Lax matrix $\tilde{L}_a(\textbf{x})$, i.e. 
$ \lim_{\pmb{\chi}\rightarrow0}\mathcal{L}_a(\textbf{x},\pmb{\chi})=\tilde{L}_a(\textbf{x})$, and 
$\sdet(\mathcal{L}_a(\textbf{x},\pmb{\chi}))=\det(\tilde{L}_a(\textbf{x}))=a-\lambda$.  We have to notice that this is not the only possible 
extension of $\tilde{L}_a$ satisfying these two properties. However, in this paper, we shall focus on the case of matrix \eqref{G-laxYBKdV}, which constitutes a Lax matrix of 
a noncommutative Yang-Baxter map and its corresponding quadrilateral system. 

In particular, we begin with the following lemma.  
\vspace{1cm}

\begin{lemma}\label{lemma}
The Lax equation
\begin{equation}\label{LaxEq-YB}
\mathcal{L}_a(\textbf{u},\pmb{\xi})\mathcal{L}_b(\textbf{v},\pmb{\eta})=\mathcal{L}_b(\textbf{y},\pmb{\psi})\mathcal{L}_a(\textbf{x},\pmb{\chi}),
\end{equation}
where $\mathcal{L}_a=\mathcal{L}_a(\textbf{x},\pmb{\chi})$ is given by \eqref{G-laxYBKdV} is equivalent to the following system 
\begin{subequations}\label{corresp}
\begin{align}
u_1&=y_1+\frac{a-b}{x_1-y_2+\chi_1\psi_2}, \ 
\xi_1 =\psi_1-\frac{a-b}{x_1-y_2}\chi_1, \ 
\xi_2=\psi_2 \nonumber \\
v_2&=x_2+\frac{a-b}{x_1-y_2+\chi_1\psi_2}, \ 
\eta_1=\chi_1, \ 
\eta_2=\chi_2+\frac{a-b}{x_1-y_2}\psi_2 \nonumber 
\end{align}
\end{subequations}
and 
\begin{equation}\label{equ2v1}
u_2-v_1=y_2-x_1
\end{equation}
\end{lemma}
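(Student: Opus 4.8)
The plan is to compute both sides of the Lax equation \eqref{LaxEq-YB} explicitly as $3\times 3$ matrices with Grassmann entries and equate them entry by entry. First I would write out the product $\mathcal{L}_b(\textbf{y},\pmb{\psi})\mathcal{L}_a(\textbf{x},\pmb{\chi})$ on the right-hand side; this is a fixed matrix whose nine entries are explicit polynomials in $x_1,x_2,y_1,y_2$ (degree $\le 2$, linear in $\lambda$) plus terms quadratic in the odd variables $\chi_i,\psi_i$. Likewise the left-hand side $\mathcal{L}_a(\textbf{u},\pmb{\xi})\mathcal{L}_b(\textbf{v},\pmb{\eta})$ has the same shape in the unknowns $u_i,v_i,\xi_i,\eta_i$. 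Matching the coefficients of $\lambda$ first isolates the simplest relations: the $\lambda$-terms only appear in the $(1,2)$ entry and force a relation among $u_1,v_1,u_2,v_2$, while the $\lambda$-free comparison of the remaining entries gives the rest.

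The key steps, in order: (i) from the $(2,1)$, $(3,1)$, $(2,3)$, $(3,3)$ entries — which involve only the ``bosonic corner'' of the matrices — recover immediately $\eta_1=\chi_1$, $\xi_2=\psi_2$, and an equation relating $u_1,v_1$ to the data; (ii) from the $(1,1)$ and $(1,3)$ entries, combined with these, extract $u_1$ and $\xi_1$; (iii) from the $(2,2)$ and $(3,2)$ entries similarly extract $v_2$ and $\eta_2$; (iv) the $(1,2)$ entry, after substituting everything found so far, collapses to the single scalar constraint \eqref{equ2v1}, i.e. $u_2-v_1=y_2-x_1$. Along the way one must be careful that the denominators $x_1-y_2$ and $x_1-y_2+\chi_1\psi_2$ are distinct: since $(\chi_1\psi_2)^2=0$, one has $\tfrac{1}{x_1-y_2+\chi_1\psi_2}=\tfrac{1}{x_1-y_2}-\tfrac{\chi_1\psi_2}{(x_1-y_2)^2}$, and this expansion is what produces the two different forms appearing in the displayed system \eqref{corresp}. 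Conversely, one checks that substituting the formulas of \eqref{corresp} together with \eqref{equ2v1} back into the left-hand side reproduces the right-hand side, giving the equivalence.

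The main obstacle I expect is bookkeeping rather than conceptual: keeping the sign conventions of Grassmann commutativity straight (every time an odd variable is moved past another, and in particular $\chi_1\psi_2=-\psi_2\chi_1$), and making sure that the ``mixed'' entries of the matrix products — those that pick up a single product of two odd variables — are collected correctly. A secondary subtlety is that \eqref{LaxEq-YB} a priori only determines the new variables implicitly; one has to verify that the solution is in fact unique (so that the Lax equation is genuinely \emph{equivalent} to \eqref{corresp}–\eqref{equ2v1}, not merely implied by it), which follows because at each stage above the relevant entry determines one unknown linearly in terms of already-determined quantities. Once uniqueness is in hand, the equivalence with the stated system, and the extra scalar relation \eqref{equ2v1}, follow.
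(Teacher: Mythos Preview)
Your approach is essentially that of the paper: expand the $3\times 3$ Lax equation entrywise, read off the immediate relations $\xi_2=\psi_2$, $\eta_1=\chi_1$, solve the remaining polynomial system, and use nilpotency of the odd variables (your expansion $\frac{1}{x_1-y_2+\chi_1\psi_2}=\frac{1}{x_1-y_2}-\frac{\chi_1\psi_2}{(x_1-y_2)^2}$ is exactly the paper's ``multiply by the conjugate $x_1-y_2-\chi_1\psi_2$'' step) to pass between the two denominators. One bookkeeping correction to your ordering: the constraint \eqref{equ2v1} is the $(2,1)$ entry itself (it reads $u_2-v_1=y_2-x_1$ on the nose, not something in $u_1,v_1$), and it is needed as an \emph{input} when you solve the $(1,1)$ entry for $u_1$ --- so it belongs in your step (i), not (iv); the $(1,2)$ entry instead supplies $v_2-u_1=x_2-y_1$ at order $\lambda$ and a redundant relation at order $\lambda^0$.
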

\begin{proof}
Equation \eqref{LaxEq-YB} is equivalent to $\xi_2=\psi_2$, $\eta_1=\chi_1$, equation \eqref{equ2v1} and the following system
\begin{subequations}\label{LaxSys}
\begin{align}
&v_2-u_1=x_2-y_1,\label{LaxSys-a}\\
&u_1v_1-(a+u_1u_2+\xi_1\psi_2)=y_1x_1-(b+y_1y_2+\psi_1\psi_2),\label{LaxSys-b}\\
&u_2v_2-(b+v_1v_2+\chi_1\eta_2)=y_2x_2-(a+x_1x_2+\chi_1\chi_2),\label{LaxSys-c}\\
&\xi_1+u_1\chi_1=\psi_1+y_1\chi_1,\label{LaxSys-d}\\
&\eta_2-\psi_2v_2=\chi_2-\psi_2x_2,\label{LaxSys-e}\\
&u_1(b+v_1v_2+\chi_1\eta_2)-v_2(a+u_1u_2+\xi_1\psi_2)+\xi_1\eta_2=\nonumber\\
&y_1(a+x_1x_2+\chi_1\chi_2)-x_2(b+y_1y_2+\psi_1\psi_2)+\psi_1\chi_2,\label{LaxSys-f}
\end{align}
\end{subequations}
for variables $u_1$, $u_2$, $\xi_1$, $v_1$, $v_2$ and $\eta_2$. Now, from \eqref{LaxSys-d} follows that $\xi_1\psi_2=\psi_1\psi_2+(y_1-u_1)\chi_1\psi_2$. Substituting the latter into \eqref{LaxSys-b} and using \eqref{equ2v1}, we obtain
\begin{equation*}
u_1(x_1-y_2+\chi_1\psi_2)=a-b+y_1(x_1-y_2+\chi_1\psi_2),
\end{equation*}
which implies $$u_1=y_1+\frac{a-b}{x_1-y_2+\chi_1\psi_2}.$$ Consequently, using \eqref{LaxSys-a}, we derive $v_2=x_2+\frac{a-b}{x_1-y_2+\chi_1\psi_2}$. In addition, equations \eqref{LaxSys-d} and \eqref{LaxSys-e} imply the following
\begin{equation}\label{xi1eta2unsimplified}
\xi_1=\psi_1+\frac{b-a}{x_1-y_2+\chi_1\psi_2}\chi_1, \ \eta_2=\chi_2+\frac{a-b}{x_1-y_2+\chi_1\psi_2}\psi_2,
\end{equation}
respectively. Multiplying both numerators and denominators of the above fractions with the conjugate expression of the denominator ``$x_1-y_2-\chi_1\psi_2$", and taking into account the fact that $\chi_1^2=\psi_2^2=0$ (since $\chi_1,\psi_2\in G_0$), we obtain $\xi_1$ and $\eta_2$.
\end{proof}

Since equation \eqref{equ2v1} is valid for any value of $u_2,v_1\in G_0$, system \eqref{corresp}-\eqref{equ2v1} is not defining a map, but a correspondence. Yet, this correspondence does not satisfy the Yang-Baxter equation for any $u_2,v_1\in G_0$ satisfying \eqref{equ2v1}. Nevertheless, as the following theorem states, a particular solution of \eqref{equ2v1} defines a Yang-Baxter map.

%Now, with the help of the above corollary, and following \cite{Kouloukas2, Veselov}, we prove the following.

\begin{theorem}\label{YBproperty}
The map
\begin{equation}\label{GmapYB}
S_{a,b}:((\textbf{x},\pmb{\chi}),(\textbf{y},\pmb{\psi})){\mapsto} ((\textbf{u},\pmb{\xi}),(\textbf{v},\pmb{\eta})),
\end{equation}
defined by
\begin{subequations}\label{superYB}
\begin{align}
x_1\mapsto u_1&=y_1+\frac{a-b}{x_1-y_2+\chi_1\psi_2};\label{superYB-a}\\
x_2\mapsto u_2&=y_2;\label{superYB-b}\\
\chi_1\mapsto \xi_1 &=\psi_1-\frac{a-b}{x_1-y_2}\chi_1;\label{superYB-c}\\
\chi_2\mapsto \xi_2&=\psi_2;\label{superYB-d}\\
y_1\mapsto v_1&=x_1;\label{superYB-e}\\
y_2\mapsto v_2&=x_2+\frac{a-b}{x_1-y_2+\chi_1\psi_2};\label{superYB-f}\\
\psi_1\mapsto \eta_1&=\chi_1;\label{superYB-g}\\
\psi_2\mapsto \eta_2&=\chi_2+\frac{a-b}{x_1-y_2}\psi_2,\label{superYB-h}
\end{align}
\end{subequations}
where $x_i,y_i,u_i,v_i\in G_0$, $\chi_i,\psi_i,\xi_i,\eta_i\in G_1$, $i=1,2$, is a parametric Yang-Baxter map, with Lax matrix 
$\mathcal{L}_a$ given by \eqref{G-laxYBKdV}, that satisfies 
\begin{equation}\label{Laxrept}
\mathcal{L}_a(u_1,u_2,\xi_1,\xi_2)\mathcal{L}_b(v_1,v_2,\eta_1,\eta_2)=\mathcal{L}_b(y_1,y_2,\psi_1,\psi_2)\mathcal{L}_a(x_1,x_2,\chi_1,\chi_2).
\end{equation}
Furthermore, $S_{a,b}$ admits the following invariants
\begin{subequations}\label{sinvYB}
\begin{align}
I_1&=(x_1-y_2)(y_1-x_2),\label{sinvYB-a} \\ 
I_2&=\chi_1\chi_2+\psi_1\psi_2,\label{sinvYB-b} \\ 
I_3&=x_2-x_1+y_2-y_1,\label{sinvYB-c} \\
I_4&=(x_1-y_2)\chi_2\psi_1+(x_2-y_1)\chi_1\psi_2-a\psi_1\psi_2-b\chi_1\chi_2,\label{sinvYB-d}
\end{align}
\end{subequations}
and its bosonic limit is the lift of the dpKdV equation \eqref{YBliftKdV}.

\end{theorem}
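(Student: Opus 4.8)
The plan is to establish the four claims in turn: the Lax relation \eqref{Laxrept}, the Yang--Baxter property, the invariants \eqref{sinvYB}, and the bosonic limit. The first and last require essentially no work. Comparing the explicit formulas \eqref{superYB} with the correspondence \eqref{corresp}, one sees that $S_{a,b}$ is exactly that correspondence restricted to the branch $u_2=y_2$, $v_1=x_1$; this choice satisfies the constraint \eqref{equ2v1}, so Lemma~\ref{lemma} immediately yields \eqref{Laxrept}. For the bosonic limit, setting $\pmb\chi=\pmb\psi=0$ (hence $\pmb\xi=\pmb\eta=0$) in \eqref{superYB} collapses $x_1-y_2+\chi_1\psi_2$ to $x_1-y_2$ and removes \eqref{superYB-c}, \eqref{superYB-d}, \eqref{superYB-g}, \eqref{superYB-h}, leaving precisely the map \eqref{YBliftKdV} on $(x_1,x_2,y_1,y_2)$.

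For the Yang--Baxter property I would invoke the trifactorisation criterion of Section~\ref{Preliminaries}: since $S_{a,b}$ already satisfies \eqref{Laxrept}, it suffices to show that the refactorisation problem
\[
\mathcal{L}_a(\textbf{u},\pmb{\xi};\lambda)\,\mathcal{L}_b(\textbf{v},\pmb{\eta};\lambda)\,\mathcal{L}_c(\textbf{w},\pmb{\gamma};\lambda)=\mathcal{L}_a(\textbf{x},\pmb{\chi};\lambda)\,\mathcal{L}_b(\textbf{y},\pmb{\psi};\lambda)\,\mathcal{L}_c(\textbf{z},\pmb{\zeta};\lambda)
\]
forces the two triples to coincide. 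Writing $\mathcal{L}_a(\textbf{x},\pmb\chi;\lambda)=A_a(\textbf{x},\pmb\chi)-\lambda E$ with $E$ the $(1,2)$ matrix unit, I would expand both sides as polynomials in $\lambda$ and equate coefficients. Using $E^2=0$, the $\lambda^3$ term vanishes and the $\lambda^2$ term equals $-E$ on both sides, so it carries no information; the $\lambda^1$ coefficient, $-(EA_bA_c+A_aEA_c+A_aA_bE)$, produces a first batch of relations (in particular it forces $\xi_2=\chi_2$, $\gamma_1=\zeta_1$ and a pair of linear relations among the even variables), and the $\lambda^0$ coefficient, $A_a(\textbf{u},\pmb\xi)A_b(\textbf{v},\pmb\eta)A_c(\textbf{w},\pmb\gamma)=A_a(\textbf{x},\pmb\chi)A_b(\textbf{y},\pmb\psi)A_c(\textbf{z},\pmb\zeta)$, then lets one finish the elimination, the nilpotency relations $\chi_i^2=\psi_i^2=\dots=0$ being used throughout to linearise the rational expressions exactly as in the proof of Lemma~\ref{lemma}. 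This elimination over the Grassmann algebra is the step I expect to be the main obstacle: it must be organised so that the two-factor ambiguity of Lemma~\ref{lemma} (the free choice of $u_2,v_1$, reflecting the fact that $\mathcal{L}$ is not strong) is genuinely removed by the presence of the third factor and no spurious branch survives. As a fallback, the Yang--Baxter relation \eqref{SYB_eq1} can instead be verified by composing \eqref{superYB} in both orders; the odd variables being nilpotent, all expressions stay finite, but the computation is long and is best delegated to computer algebra.

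Finally, the invariants \eqref{sinvYB} I would check by direct substitution of \eqref{superYB}. One computes $u_1-v_2=y_1-x_2$ and $v_1-u_2=x_1-y_2$, which gives \eqref{sinvYB-a}, and, together with $u_2=y_2$, $v_1=x_1$, also \eqref{sinvYB-c}; for \eqref{sinvYB-b} and \eqref{sinvYB-d} one expands the relevant products of odd variables, uses $\xi_2=\psi_2$, $\eta_1=\chi_1$ and $\chi_i^2=\psi_i^2=0$, and observes that every term proportional to $\tfrac{a-b}{x_1-y_2}$ cancels. Independently, applying $\str$ to \eqref{Laxrept} and using $\str\big(\mathcal{L}_b(\textbf{y},\pmb\psi;\lambda)\mathcal{L}_a(\textbf{x},\pmb\chi;\lambda)\big)=2\lambda+I_1-I_2-a-b-1$ reproduces the invariance of the combination $I_1-I_2$, a useful consistency check on \eqref{sinvYB-a}--\eqref{sinvYB-b}. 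The bosonic-limit claim was already disposed of in the first paragraph.
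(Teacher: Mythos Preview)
Your treatment of the Lax relation \eqref{Laxrept}, the invariants, and the bosonic limit is essentially identical to the paper's: the paper also invokes Lemma~\ref{lemma} on the branch $u_2=y_2$, $v_1=x_1$, computes $\str(\mathcal{L}_b\mathcal{L}_a)=2\lambda+I_1-I_2-a-b-1$ as a check, and verifies $I_1,\dots,I_4$ individually by substitution.

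For the Yang--Baxter property, both you and the paper use a trifactorisation argument, but the setups differ in a way that matters. You propose to show that the \emph{unconstrained} triple refactorisation
\[
\mathcal{L}_a(\textbf{u},\pmb{\xi})\,\mathcal{L}_b(\textbf{v},\pmb{\eta})\,\mathcal{L}_c(\textbf{w},\pmb{\gamma})
=\mathcal{L}_a(\textbf{x},\pmb{\chi})\,\mathcal{L}_b(\textbf{y},\pmb{\psi})\,\mathcal{L}_c(\textbf{z},\pmb{\zeta})
\]
forces all twelve pairs of arguments to agree. Since the two-factor problem already has a one-parameter family of solutions (Lemma~\ref{lemma}), it is not clear---and you rightly flag this---that adding a third factor removes the ambiguity; your plan may simply fail at that point. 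The paper avoids this risk by first exploiting the special shape of $S_{a,b}$: because $u_2=y_2$, $v_1=x_1$, $\xi_2=\psi_2$, $\eta_1=\chi_1$ are built into the map, applying \eqref{Laxrept} three times along each side of \eqref{SYB_eq1} produces two triple products in which four argument slots ($z_2,\zeta_2$ in the first factor and $x_1,\chi_1$ in the third) are already \emph{identical}. Only a \emph{constrained} trifactorisation with eight unknowns per side remains, and the paper solves that system explicitly, eliminating everything in terms of $\tilde{\tilde z}_2-\hat{\hat z}_2$ and then extracting a factor $(\tilde{\tilde z}_2-\hat{\hat z}_2)\cdot(\text{generically nonzero})=0$. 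So the paper never claims uniqueness of the full three-factor problem; it only needs the reduced one, which is both tractable and sufficient. That reduction is the idea your proposal is missing, and without it your main line of attack is exposed; your fallback (direct composition, by hand or computer algebra) is of course valid.
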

\begin{proof}
See Appendix.
\end{proof}

\section{Squeeze down to a noncommutative dpKdV system}\label{lift-down}

In this section, we show that the Yang-Baxter map of theorem \ref{YBproperty} can be squeezed down to a system of equations on a quad-graph, which possesses 
a Lax representation. This system is defined with three fields on each vertex of an elementary quadrilateral, one even and two odd.  Its bosonic limit leads to the 
dpKdV equation. 

Similarly to \cite{Kouloukas-Dihn, Pap-Tongas}, we prove the following proposition for a specific class of Yang-Baxter maps.  

\begin{proposition}\label{map4graph}
Let $S_{a,b}:\left((\textbf{x},\pmb{\chi}),(\textbf{y},\pmb{\psi})\right) \mapsto \left((\textbf{u},\pmb{\xi}),(\textbf{v},\pmb{\eta})\right)$ be a parametric Yang-Baxter map with
\begin{subequations}\label{YBmapForm}
\begin{align}
&(u_1,u_2,\xi_1,\xi_2):=\left(F_{a,b}(y_1,x_1,y_2,\chi_1,\psi_2),y_2,\Phi^1_{a,b}(x_1,y_2,\psi_1,\chi_1),\psi_2\right)\label{YBmapForm-b}\\
&(v_1,v_2,\eta_1,\eta_2):=\left(x_1,F_{a,b}(x_2,x_1,y_2,\chi_1,\psi_2),\chi_1,\Phi^2_{a,b}(x_1,y_2,\chi_2,\psi_2)\right),\label{YBmapForm-c}
\end{align}
\end{subequations}
and associated Lax matrix $\mathcal{L}_a=\mathcal{L}_a(x_1,x_2,\chi_1,\chi_2)$, where $x_i\in G_0$ and $\chi_i\in G_1$, $i=1,2$. Then, the following system of equations
\begin{align}
f_{01}&=F_{a,b}(f_{10},f,f_{11},\tau,\phi_{11}),\nonumber\\
\phi_{01}&=\Phi^2_{a,b}(f,f_{11},\phi_{10},\phi_{11}), \label{mapQuadGraph}\\
\tau_{01}&=\Phi^1_{a,b}(f,f_{11},\tau_{10},\tau),\nonumber
\end{align}
satisfies the Lax equation
\begin{equation}\label{dLaxRep}
\mathcal{L}_a(f_{01},f_{11},\tau_{01},\phi_{11})\mathcal{L}_b(f,f_{01},\tau,\phi_{01})=\mathcal{L}_b(f_{10},f_{11},\tau_{10},\phi_{11})\mathcal{L}_a(f,f_{10},\tau,\phi_{10}).
\end{equation}
\end{proposition}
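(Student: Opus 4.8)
The plan is to obtain the Lax equation (\ref{dLaxRep}) as a direct specialisation of the refactorisation identity satisfied by $S_{a,b}$, following the ``squeeze down'' mechanism of \cite{Kouloukas-Dihn, Pap-Tongas}. Since $\mathcal{L}_a$ is an associated Lax matrix, the identity
\[
\mathcal{L}_a(u_1,u_2,\xi_1,\xi_2)\,\mathcal{L}_b(v_1,v_2,\eta_1,\eta_2)=\mathcal{L}_b(y_1,y_2,\psi_1,\psi_2)\,\mathcal{L}_a(x_1,x_2,\chi_1,\chi_2)
\]
holds identically in the (suitably graded) matrix entries, the images $u_i,\xi_i,v_i,\eta_i$ being the expressions prescribed by the special form (\ref{YBmapForm}); hence it survives any parity-preserving substitution of the input variables $x_i,\chi_i,y_i,\psi_i$.

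The substitution to perform is the one read off by matching the four factors of (\ref{dLaxRep}) against this identity, namely
\[
(x_1,x_2,\chi_1,\chi_2)=(f,f_{10},\tau,\phi_{10}),\qquad (y_1,y_2,\psi_1,\psi_2)=(f_{10},f_{11},\tau_{10},\phi_{11}),
\]
whose defining feature is the coincidence $x_2=y_1=f_{10}$. Under it, the structural constraints built into (\ref{YBmapForm}) give $u_2=y_2=f_{11}$, $\xi_2=\psi_2=\phi_{11}$, $v_1=x_1=f$ and $\eta_1=\chi_1=\tau$ at once, while the remaining three image components are precisely
\[
u_1=F_{a,b}(f_{10},f,f_{11},\tau,\phi_{11})=:f_{01},\quad \xi_1=\Phi^1_{a,b}(f,f_{11},\tau_{10},\tau)=:\tau_{01},\quad \eta_2=\Phi^2_{a,b}(f,f_{11},\phi_{10},\phi_{11})=:\phi_{01},
\]
i.e. exactly the three relations of the quad-graph system (\ref{mapQuadGraph}), which I would adopt as the definitions of the images $f_{01},\tau_{01},\phi_{01}$.

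The one point that genuinely uses the hypothesis is the value of the middle slot of the factor $\mathcal{L}_b(v_1,v_2,\eta_1,\eta_2)$: by (\ref{YBmapForm-c}) it equals $v_2=F_{a,b}(x_2,x_1,y_2,\chi_1,\psi_2)$, formed from the \emph{same} function $F_{a,b}$ that produces $u_1=F_{a,b}(y_1,x_1,y_2,\chi_1,\psi_2)$, so the coincidence $x_2=y_1$ forces $v_2=u_1=f_{01}$. Thus $(\textbf{u},\pmb{\xi})=(f_{01},f_{11},\tau_{01},\phi_{11})$ and $(\textbf{v},\pmb{\eta})=(f,f_{01},\tau,\phi_{01})$, and inserting these, together with the specialised inputs, into the refactorisation identity yields (\ref{dLaxRep}) verbatim. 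I do not foresee a real obstacle here: the argument is essentially bookkeeping, and the only thing to be careful about is that the slots of the two sides be specialised mutually consistently — which is guaranteed precisely by the form (\ref{YBmapForm}), namely the shared function $F_{a,b}$ in the $u_1$- and $v_2$-components together with the identity components $u_2=y_2$, $\xi_2=\psi_2$, $v_1=x_1$, $\eta_1=\chi_1$.
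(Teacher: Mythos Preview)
Your argument is correct and follows essentially the same route as the paper: set $x_2=y_1$, use the shared function $F_{a,b}$ in the $u_1$- and $v_2$-slots to conclude $u_1=v_2$, make the lattice identifications $(x_1,x_2,\chi_1,\chi_2)=(f,f_{10},\tau,\phi_{10})$ and $(y_1,y_2,\psi_1,\psi_2)=(f_{10},f_{11},\tau_{10},\phi_{11})$, and read off (\ref{dLaxRep}) from the refactorisation identity. Your write-up is a bit more explicit about why the $v_2=u_1$ coincidence is forced, but there is no substantive difference in strategy.
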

\begin{proof}
Setting $x_2=y_1$, equations \eqref{YBmapForm-b}-\eqref{YBmapForm-c} imply $u_1=v_2$. Now, labelling $v_1=x_1=f$, $x_2=y_1=f_{10}$, $u_2=y_2=f_{11}$, $\eta_1=\chi_1=\tau$, $\chi_2=\phi_{10}$, $\psi_1=\tau_{10}$ and $\xi_2=\psi_2=\phi_{11}$, then equations \eqref{YBmapForm} and \eqref{mapQuadGraph} imply $u_1=v_2=f_{01}$, $\xi_1=\tau_{01}$ and $\eta_2=\phi_{01}$. Finally, substituting these values to \eqref{Laxrept}, we obtain \eqref{dLaxRep}.
\end{proof}

\begin{remark}
We can interpret system (\ref{mapQuadGraph}) with its fields lying on the vertices of the quad-graph of Figure \ref{elementarySQR}, namely the vector variables $w=(f,\tau,\phi)$,  
$w_{10}=(f_{10},\tau_{10},\phi_{10})$, $w_{01}=(f_{01},\tau_{01},\phi_{01})$ and $w_{11}=(f_{11},\tau_{11},\phi_{11})$ are being placed on each corresponding vertex. The Lax representation  
\eqref{dLaxRep} of this system is equivalent 
to (\ref{Laxintro}), for $M=L$ and 
$$L(w,w_{10},a)=L \left( (f,\tau,\phi), (f_{10},\tau_{10},\phi_{10}),a \right)\equiv \mathcal{L}_a(f,f_{10},\tau,\phi_{10}).$$
\end{remark}

Now, in the case of the Yang-Baxter map of theorem \ref{YBproperty}, the above proposition leads to the following. 
\begin{theorem}(Grassmann extension of dpKdV)
The following system 
\begin{align}
&(f_{01}-f_{10})(f-f_{11}+\tau\phi_{11})=a-b,\nonumber\\
&(\phi_{01}-\phi_{10})(f-f_{11})=(a-b)\phi_{11},\label{G-dpKdV}\\
&(\tau_{01}-\tau_{10})(f-f_{11})=(b-a)\tau.\nonumber
\end{align}
admits the Lax representation \eqref{dLaxRep} with Lax matrix
\begin{equation}\label{G-laxdpKdV}
 \mathcal{L}_a(f,f_{10},\tau,\phi_{10}):=
\left(
\begin{matrix}
 f & a+f f_{10}+\tau \phi_{10}-\lambda & \tau \\
 -1 & -f_{10} & 0\\
  0 & \phi_{10} & 1
\end{matrix}\right).
\end{equation}
Its bosonic limit is the dpKdV equation \eqref{pkdv}. 
\end{theorem}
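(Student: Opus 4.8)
The plan is to apply Proposition \ref{map4graph} directly to the Yang-Baxter map $S_{a,b}$ of Theorem \ref{YBproperty}, so the only real work is to check that $S_{a,b}$ has the structural form \eqref{YBmapForm} and then to read off the resulting quad-graph system \eqref{mapQuadGraph} and simplify it to \eqref{G-dpKdV}. First I would match the components of \eqref{superYB} against \eqref{YBmapForm-b}--\eqref{YBmapForm-c}: from \eqref{superYB-a} and \eqref{superYB-f} one identifies
\[
F_{a,b}(p,x_1,y_2,\chi_1,\psi_2)=p+\frac{a-b}{x_1-y_2+\chi_1\psi_2},
\]
evaluated at $p=y_1$ for $u_1$ and at $p=x_2$ for $v_2$; from \eqref{superYB-c}, $\Phi^1_{a,b}(x_1,y_2,\psi_1,\chi_1)=\psi_1-\frac{a-b}{x_1-y_2}\chi_1$; and from \eqref{superYB-h}, $\Phi^2_{a,b}(x_1,y_2,\chi_2,\psi_2)=\chi_2+\frac{a-b}{x_1-y_2}\psi_2$. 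The components $u_2=y_2$, $\xi_2=\psi_2$, $v_1=x_1$, $\eta_1=\chi_1$ match the trivial slots of \eqref{YBmapForm} verbatim, so $S_{a,b}$ is indeed of the required form and the Lax matrix is \eqref{G-laxYBKdV}, i.e. \eqref{G-laxdpKdV} after the relabelling of variables.

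Next I would invoke Proposition \ref{map4graph} with the substitutions listed there ($v_1=x_1=f$, $x_2=y_1=f_{10}$, $u_2=y_2=f_{11}$, $\eta_1=\chi_1=\tau$, $\chi_2=\phi_{10}$, $\psi_1=\tau_{10}$, $\xi_2=\psi_2=\phi_{11}$), which immediately gives the Lax equation \eqref{dLaxRep} and the system \eqref{mapQuadGraph} in the explicit form
\begin{align*}
f_{01}&=f_{10}+\frac{a-b}{f-f_{11}+\tau\phi_{11}},\\
\phi_{01}&=\phi_{10}+\frac{a-b}{f-f_{11}}\phi_{11},\\
\tau_{01}&=\tau_{10}-\frac{a-b}{f-f_{11}}\tau.
\end{align*}
Clearing denominators in each equation yields \eqref{G-dpKdV} verbatim; in the first equation one must note that $\tau\phi_{11}$ is a product of two odd (Grassmann) variables so the denominator $f-f_{11}+\tau\phi_{11}$ behaves like an even element and the multiplication is legitimate. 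The bosonic limit $\tau,\phi\to 0$ collapses the first equation to $(f_{01}-f_{10})(f-f_{11})=a-b$, which is \eqref{pkdv}, and makes the other two equations vacuous, confirming the last claim.

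I do not anticipate a serious obstacle: the statement is essentially a corollary of the already-proven Proposition \ref{map4graph} and Theorem \ref{YBproperty}. The one point that needs a little care is the bookkeeping of which Grassmann variable plays which role under the identification $\eta_1=\chi_1=\tau$ together with $\psi_1=\tau_{10}$ and $\chi_2=\phi_{10}$, $\psi_2=\phi_{11}$ --- one should double-check that the arguments of $\Phi^1$ and $\Phi^2$ in \eqref{mapQuadGraph} receive the shifted fields consistently with the placement of $w,w_{10},w_{01},w_{11}$ on the quadrilateral, and that no sign is lost when conjugating the denominators as in the proof of Lemma \ref{lemma}. Verifying that $\mathcal{L}_a$ in \eqref{G-laxdpKdV} is literally \eqref{G-laxYBKdV} under the relabelling, and hence that \eqref{dLaxRep} is a genuine Lax representation equivalent to \eqref{G-dpKdV}, completes the argument.
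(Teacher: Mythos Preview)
Your proposal is correct and follows essentially the same route as the paper: apply Proposition~\ref{map4graph} to the Yang--Baxter map $S_{a,b}$ of Theorem~\ref{YBproperty}, read off the three equations for $f_{01},\phi_{01},\tau_{01}$ in rational form, clear denominators to get \eqref{G-dpKdV}, note that the Lax matrix \eqref{G-laxdpKdV} is \eqref{G-laxYBKdV} under the relabelling, and take the bosonic limit. The paper is terser (it does not spell out the identification of $F_{a,b},\Phi^1_{a,b},\Phi^2_{a,b}$ you write down), but the argument is identical.
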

\begin{proof}
Due to Proposition \ref{map4graph}, from \eqref{superYB-a}, \eqref{superYB-c} and \eqref{superYB-h}, we obtain the following system of equations
\begin{align*}
f_{01}&=f_{10}+\frac{a-b}{f-f_{11}+\tau \phi_{11}},\\
\phi_{01}&=\phi_{10}+\frac{a-b}{f-f_{11}}\phi_{11},\\
\tau_{01}&=\tau_{10}-\frac{a-b}{f-f_{11}}\tau,
\end{align*}
which can be rewritten in the form of system \eqref{G-dpKdV}. The Lax matrix \eqref{G-laxdpKdV} is derived from \eqref{G-laxYBKdV} by setting $(x_1,x_2,\chi_1,\chi_2)\rightarrow (f,f_{10},\tau,\phi_{10})$. Finally, by setting $\tau=\tau_{10}=\tau_{01}=0$ and $\phi_{10}=\phi_{01}=\phi_{11}=0$ in \eqref{G-dpKdV}, it follows that $f$ satisfies the classical dpKdV equation.
\end{proof}

\section{Integrable commutative analogues} \label{commutative}
Given the map \eqref{GmapYB}, the question arises as to weather it preserves the Yang-Baxter property, if we consider all variables in \eqref{superYB} to be commutative. Although the answer is negative,   if we consider the version of the map before simplifying the entries $\xi_1$ and $\eta_2$ in \eqref{superYB-c} and \eqref{superYB-h}, respectively, using the properties of odd variables (see \eqref{xi1eta2unsimplified}), then the corresponding map with all its variables being considered as even satisfies the Yang-Baxter equation. That is, the following map
\begin{equation}\label{commYBmap}
Y_{a,b}:(\textbf{x},\textbf{y}){\rightarrow}(\textbf{u},\textbf{v}),
\end{equation}
which is defined by
\begin{subequations}\label{commYB}
\allowdisplaybreaks
\begin{align}
x_1\mapsto u_1&=y_1+\frac{a-b}{x_1-y_2+X_1 Y_2};\\
x_2\mapsto u_2&=y_2;\\
X_1\mapsto U_1 &=Y_1-\frac{a-b}{x_1-y_2+X_1Y_2}X_1\\
X_2\mapsto U_2 &=Y_2\\
y_1\mapsto v_1&=x_1;\\
y_2\mapsto v_2&=x_2+\frac{a-b}{x_1-y_2+X_1Y_2};\\
Y_1\mapsto V_1&=X_1;\label{commYB-g}\\
Y_2\mapsto V_2&=X_2+\frac{a-b}{x_1-y_2+X_1Y_2}Y_2,
\end{align}
\end{subequations}
is a parametric Yang-Baxter map, and possesses the following (not strong) Lax representation:
\begin{equation*}
L_a(\textbf{u})L_b(\textbf{v})=L_b(\textbf{y})L_a(\textbf{x}),
\end{equation*}
where
\begin{equation} \label{comLax}
L_a(\textbf{x}):=\left(
\begin{matrix}
 x_1 & a+x_1x_2+X_1 X_2-\lambda & X_1 \\
 -1 & -x_2 & 0\\
  0 & X_2 & 1
\end{matrix}\right),
\end{equation}
and $\textbf{x}:=(x_1,x_2,X_1,X_2)$. 

By direct calculation, we can  prove that this Lax matrix is associated with the Sklyanin bracket \cite{skly1, skly2}. Specifically, we have the following.

\begin{proposition}
Matrix $L_a(\textbf{x})=L_a(\textbf{x},\lambda)$ in \eqref{comLax} satisfies 
\begin{equation} \label{sklyanin}
\{L(\mathbf{x},\lambda_1 ) \ \overset{\otimes }{,} \
L(\mathbf{x},\lambda_2)\}=[\frac{r}{\lambda_1-\lambda_2}(\lambda_1,\lambda_2),L(\mathbf{x},\lambda_1 )\otimes
L(\mathbf{x},\lambda_2)], 
\end{equation}
with $r(x\otimes y) = y\otimes x$, iff $\{x_1,x_2\}=\{X_1,X_2\}=1$ and $\{x_1,X_1\}=\{x_1,X_2\}=\{x_2,X_1\}=\{x_2,X_2\}=0$. 
The map \eqref{commYBmap} is Poisson with respect to the corresponding extended bracket in 
$V \times V$, i.e. $$ \pi=\frac{\partial}{\partial x_1} \wedge
\frac{\partial}{\partial x_2}+\frac{\partial}{\partial X_1} \wedge
\frac{\partial}{\partial X_2}+\frac{\partial}{\partial y_1} \wedge
\frac{\partial}{\partial y_2}+\frac{\partial}{\partial Y_1} \wedge
\frac{\partial}{\partial Y_2}.$$
\end{proposition}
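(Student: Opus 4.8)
The plan is to verify the two assertions separately and directly. For the first claim, I would substitute the explicit matrix $L_a(\mathbf{x},\lambda)$ from \eqref{comLax} into the Sklyanin relation \eqref{sklyanin} and compare entries. The left-hand side is the matrix of Poisson brackets $\{L_{ij}(\mathbf{x},\lambda_1),L_{kl}(\mathbf{x},\lambda_2)\}$ arranged in the tensor-product slot $(ik),(jl)$; since $L$ depends on $\mathbf{x}$ only through $x_1,x_2,X_1,X_2$ (the spectral parameter enters only additively in the $(1,2)$ entry and so contributes nothing to brackets), every such bracket is a polynomial in the four brackets $\{x_1,x_2\}$, $\{X_1,X_2\}$, $\{x_1,X_1\}$, $\{x_1,X_2\}$, $\{x_2,X_1\}$, $\{x_2,X_2\}$ with coefficients depending on the entries of $L$. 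The right-hand side, with $r(x\otimes y)=y\otimes x$ the permutation operator, is $\frac{1}{\lambda_1-\lambda_2}[P,\,L(\mathbf{x},\lambda_1)\otimes L(\mathbf{x},\lambda_2)]$; using $P(A\otimes B)P = B\otimes A$ this commutator equals $\frac{1}{\lambda_1-\lambda_2}P\bigl(L(\lambda_2)\otimes L(\lambda_1) - L(\lambda_1)\otimes L(\lambda_2)\bigr)$, and the apparent pole at $\lambda_1=\lambda_2$ cancels because $L(\lambda_1)-L(\lambda_2)=(\lambda_2-\lambda_1)E_{12}$ is independent of $\mathbf{x}$. Matching the two sides entry by entry yields a linear system for the six unknown brackets whose unique solution is $\{x_1,x_2\}=\{X_1,X_2\}=1$ and all mixed brackets zero; conversely, plugging these values back in and recomputing both sides confirms the identity. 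This is a finite, mechanical $9\times 9$ comparison.

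For the second claim — that the map \eqref{commYBmap} is Poisson with respect to the product bracket $\pi$ on $V\times V$ — I would invoke the standard refactorisation argument rather than a direct computation on the eight coordinates. The map is defined by the matrix factorisation $L_a(\mathbf{u})L_b(\mathbf{v})=L_b(\mathbf{y})L_a(\mathbf{x})$; since $L_a$ and $L_b$ each satisfy \eqref{sklyanin} with the \emph{same} classical $r$-matrix, and the two factors live in commuting copies of the Poisson manifold (the bracket $\pi$ is the direct sum of the bracket on the $(\mathbf{x})$-space and the bracket on the $(\mathbf{y})$-space), the product $T(\lambda):=L_b(\mathbf{y},\lambda)L_a(\mathbf{x},\lambda)$ also satisfies the same quadratic $r$-matrix relation — this is the familiar fact that the Sklyanin bracket is multiplicative. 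The Yang-Baxter map $Y_{a,b}$ is precisely the change of factorisation $(L_a(\mathbf{x}),L_b(\mathbf{y}))\mapsto(L_a(\mathbf{u}),L_b(\mathbf{v}))$ keeping $T(\lambda)$ fixed, so it preserves the $r$-matrix bracket on the space of such products; pulling back along the (local) factorisation map shows $Y_{a,b}$ is Poisson for $\pi$. Alternatively, if one prefers a self-contained check, one computes $\{u_i,u_j\}$, $\{v_i,v_j\}$, $\{u_i,v_j\}$ etc. from \eqref{commYB} using the chain rule and the bracket $\pi$ and verifies they reproduce $\pi$ in the new variables.

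The main obstacle is bookkeeping rather than conceptual difficulty: in the direct approach to the first part one must be careful with the index conventions for the tensor slots and with signs, and the denominator $x_1-y_2+X_1Y_2$ in \eqref{commYB} makes the brute-force verification of the second part somewhat lengthy. Using the $r$-matrix/refactorisation route sidesteps the messy computation in the second part, at the cost of needing to state precisely that (i) the Sklyanin bracket is multiplicative under matrix product of independent $L$-operators sharing the same $r$-matrix, and (ii) the local factorisation map $(\mathbf{x},\mathbf{y})\mapsto T(\lambda)$ is Poisson, both of which are classical (see \cite{skly1, skly2, Kouloukas2}). I would present the first part as the explicit entrywise matching (it is short and pins down the bracket), and the second part via the multiplicativity argument with a one-line reduction to the known refactorisation lemma.
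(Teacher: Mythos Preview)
Your approach to the first assertion --- explicit entrywise comparison of both sides of \eqref{sklyanin} --- is exactly what the paper does: it states only ``by direct calculation'' and gives no further detail, so on that part you agree.

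For the second assertion, however, your preferred refactorisation route has a genuine gap. The argument you sketch (multiplicativity of the Sklyanin bracket plus ``the map is the change of factorisation keeping $T(\lambda)$ fixed, hence Poisson'') is the standard one from \cite{Kouloukas2}, but it relies on the Lax matrix being \emph{strong}: one needs the factorisation map from the product $T(\lambda)$ back to the pair $(\mathbf{u},\mathbf{v})$ to be well-defined in order to pull the bracket back. The paper explicitly flags that the Lax representation here is \emph{not} strong (see the sentence just after \eqref{commYB-g}: ``possesses the following (not strong) Lax representation''). So the refactorisation equation alone does not determine $Y_{a,b}$, and the abstract argument does not apply without further work --- you would have to identify the extra constraints that single out this particular branch of the refactorisation and check separately that they are compatible with the Poisson structure. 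That is not obviously shorter than the direct check.

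Your fallback --- computing the brackets $\{u_i,u_j\}$, $\{v_i,v_j\}$, $\{u_i,v_j\}$ directly from \eqref{commYB} via the chain rule --- is the safe option and is what the paper's ``direct calculation'' is signalling. Given the non-strong Lax, that is the route you should present.
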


It can be readily verified that map \eqref{commYBmap}-\eqref{commYB} admits the following functionally independent integrals 
\begin{align*}
I_1&=(x_1-y_2)(x_2-y_1),\\
I_2&=X_1X_2+Y_1Y_2,\\
I_3&=x_2-x_1+y_2-y_1,\\
I_4&=(x_1-y_2+X_1Y_2)(x_2-y_1-X_2Y_1)-aY_1Y_2-bX_1X_2,
\end{align*}
which are in involution with respect to the Poisson bracket $\pi$. Thus, map \eqref{commYBmap}-\eqref{commYB} is completely integrable in the Liouville sense.

Given a Yang-Baxter map, we are mostly interested in studying the integrability of the corresponding transfer maps which occur by considering periodic initial value problems on quadrilateral lattices \cite{kp2, PNC,Veselov,Veselov3}.  In this framework, the Sklyanin bracket provides an efficient approach to prove the Liouville integrability. The comultiplication property of the Sklyanin bracket states that the monodromy matrix will also satisfy equation \eqref{sklyanin}, which, in our case, is equivalent to the canonical Poisson structure in $(V \times V)^n$, and the corresponding transfer maps preserve this Poisson structure. Furthermore, the Sklyanin bracket ensures the involutivity of the integrals derived by the trace of the corresponding monodromy matrix, which consists of products of local Lax matrices.
 However, we have to mention that, in this particular case, the spectrum of the monodromy matrix does not provide enough integrals to claim the Liouville integrability of the transfer maps. Even in the simplest case of transfer map, namely the map $Y_{a,b}$ itself (one periodic problem), we derive only three integrals from the spectrum of the corresponding  monodromy matrix $M=L_b(\textbf{y})L_a(\textbf{x})$. That is, integrals $I_1,I_2$ and $I_4$, while $I_3$ was derived simply by inspection. For these kind of integrals, as the latter, we have to prove directly that they are in involution with the rest.  
 
In a similar way as in the noncommutative case, map \eqref{commYBmap}-\eqref{commYB} can be squeezed down to the following system
\begin{align}
(f_{01}-f_{10})(f-f_{11}+g h_{11})&=a-b,\nonumber\\
(h_{01}-h_{10})(f-f_{11}+g h_{11})&=(a-b)h_{11},  \label{comsystem} \\
(g_{01}-g_{10})(f-f_{11}+g h_{11})&=(b-a)g, \nonumber
\end{align}
which possesses the Lax representation 
\begin{equation*}
L_a(f_{01},f_{11},g_{01},h_{11})L_b(f,f_{01},g,h_{01})=L_b(f_{10},f_{11},g_{10},h_{11})L_a(f,f_{10},g,h_{10}),
\end{equation*}
with Lax matrix (derived from the corresponding Lax matrix of $Y_{a,b}$)  
\begin{equation} \label{comsysLax}
L_a(f,f_{10},g,h_{10}):=
\left(
\begin{matrix}
 f & a+f f_{10}+g h_{10}-\lambda & g \\
 -1 & -f_{10} & 0\\
  0 & h_{10} & 1
\end{matrix}\right).
\end{equation}

\begin{remark}\normalfont
The Lax representations of Yang-Baxter maps  and of quad-graph equations are invariant under conjugation. In this sense, the Lax matrix (\ref{comsysLax}) is equivalent to  
\begin{equation} \label{conj}
P^{-1} L_a(f,f_{10},g,h_{10}) P=
\left(
\begin{matrix}
 -f_{10} & 0& -1 \\
 h_{10} & 1 & 0\\
 a+f f_{10}+g h_{10}-\lambda& g & f
\end{matrix}\right),
\end{equation}  
where $P=\left(
\begin{matrix}
0 & 0& 1 \\
 1 & 0 & 0\\
 0 & 1 & 0
\end{matrix}\right)$.
This Lax matrix can be seen in comparison with the Lax matrix of the discrete Boussinesq system \cite{Brid, NiPCQ, Tongas-Frank}. 
The main difference is that the $2 \times 2$ up-right block of the Boussinesq Lax matrix is the identity matrix instead of the symplectic block that appears in (\ref{conj}). 
This crucial difference indicates that 
the Lax matrices and the corresponding systems are not equivalent. Similarly, there is a Yang-Baxter map (not equivalent to (\ref{commYB})), associated with the Boussinesq system. 
\end{remark}

\section{Conclusions}  \label{conclusions}
We constructed a Grassmann extension of a Yang-Baxter map, namely map \eqref{GmapYB}-\eqref{superYB}, from a ``lift'' of the famous dpKdV equation, and we proved that it possesses the Yang-Baxter property. We showed that this map can be squeezed down to a Grassmann quad-graph dpKdV system, i.e. system \eqref{mapQuadGraph}, 
with three fields on each vertex of the quad-graph; one even and two odd. Both our Yang-Baxter map \eqref{GmapYB}-\eqref{superYB} and system \eqref{mapQuadGraph} have quite simple form and possess  zero curvature representations with $3 \times 3$ binomial (with respect to the spectral parameter $\lambda$) Lax matrices. Finally, we studied the commutative analogues of these constructions, and we proved that they satisfy similar properties. 

We believe that our results can be extended in several ways. In particular, something that deserves further investigation is  the relation of map \eqref{GmapYB}-\eqref{superYB} with integrable PDEs. 
In \cite{GKM, Sokor-Sasha-2}, the associated Lax matrices of the derived Yang-Baxter maps constitute Darboux transformations of certain PDEs of NLS and generalised KdV type, respectively. We believe that, if the entries of Lax matrix \eqref{G-laxYBKdV} are viewed as potential functions, then matrix \eqref{G-laxYBKdV} may constitute a Darboux transformation of some (noncommutative) integrable system of PDEs. 

Moreover, we believe that similar Grassmann extensions can be constructed for various integrable quad-graph equations and their corresponding Yang-Baxter maps. It would be 
interesting to investigate this problem in the case of the equations of the  ABS  classification list.

Finally, regarding the commutative analogues presented in the last section, we believe that a systematic study of the Liouville integrability of periodic reductions 
of system \eqref{comsystem}, as well as of higher dimensional transfer maps of the Yang-Baxter map \eqref{commYBmap}, deserves further investigation. 
It seems that the Sklyanin bracket provides the right framework to deal with this kind of problems \cite{Kouloukas-Dihn}.

\section*{Acknowledgements}
We would like to thank Profs A.N.W Hones, A.V. Mikhailov, V.G. Papageorgiou and J.P. Wang for the discussion and their useful comments. T.K. acknowledges support by EPSRC (Grant EP/M004333/1). The main part of this paper was written during the SIDE 12 conference in Montreal, thus we would like to thank the organisers for the opportunity to participate in this meeting; S.K.R. would also like to thank the organisers for the financial support.
 
\appendix
\section{Proof of theorem \ref{YBproperty}}
The defined $(u_1,\xi_1,v_2,\eta_2)$ satisfy equations \eqref{LaxSys} for any $u_2,v_1\in G_1$. We choose $u_2=y_2$, which implies $v_1=x_1$. Due to Lemma \ref{lemma}, map $\{\eqref{GmapYB}-\eqref{superYB}\}$ satisfies the Lax equation  \eqref{Laxrept}. However, using \eqref{superYB-b}, \eqref{superYB-d}, \eqref{superYB-e} and \eqref{superYB-g}, the Lax equation \eqref{Laxrept} reads
\begin{equation}\label{LaxreptApp}
\mathcal{L}_a(u_1,y_2,\xi_1,\psi_2)\mathcal{L}_b(x_1,v_2,\chi_1,\eta_2)=\mathcal{L}_b(y_1,y_2,\psi_1,\psi_2)\mathcal{L}_a(x_1,x_2,\chi_1,\chi_2).
\end{equation}

Now, let $((x_i,\chi_i),(y_i,\psi_i),(z_i,\zeta_i))\in V_G\times V_G\times V_G$, $i=1,2$. Then, using the left side of the Yang-Baxter equation, we adopt the following notation.
\begin{align*}
S^{23}_{b,c}(x_1,x_2,\chi_1,\chi_2,y_1,y_2,\psi_1,\psi_2,z_1,z_2,\zeta_1,\zeta_2)&=(x_1,x_2,\chi_1,\chi_2,\tilde{y}_1,\tilde{y}_2,\tilde{\psi}_1,\tilde{\psi}_2,\tilde{z}_1,\tilde{z}_2,\tilde{\zeta_1},\tilde{\zeta_2});\\
S^{13}_{a,c}\circ S^{23}_{b,c}(x_1,x_2,\chi_1,\chi_2,y_1,y_2,\psi_1,\psi_2,z_1,z_2,\zeta_1,\zeta_2)&=(\tilde{x}_1,\tilde{x}_2,\tilde{\chi_1},\tilde{\chi}_2,\tilde{y}_1,\tilde{y}_2,\tilde{\psi}_1,\tilde{\psi}_2,\tilde{\tilde{z}}_1,\tilde{\tilde{z}}_2,\tilde{\tilde{\zeta}}_1,\tilde{\tilde{\zeta}}_2);\\
S^{12}_{a,b}\circ S^{13}_{a,c}\circ S^{23}_{b,c}(x_1,x_2,\chi_1,\chi_2,y_1,y_2,\psi_1,\psi_2,z_1,z_2,\zeta_1,\zeta_2)&=(\tilde{\tilde{x}}_1,\tilde{\tilde{x}}_2,\tilde{\tilde{\chi}}_1,\tilde{\tilde{\chi}}_2,\tilde{\tilde{y}}_1,\tilde{\tilde{y}}_2,\tilde{\tilde{\psi}}_1,\tilde{\tilde{\psi}}_2,\tilde{\tilde{z}}_1,\tilde{\tilde{z}}_2,\tilde{\tilde{\zeta}}_1,\tilde{\tilde{\zeta}}_2).
\end{align*}
Now, using the right side of the Yang-Baxter equation, 
\begin{align*}
S^{12}_{a,b}(x_1,x_2,\chi_1,\chi_2,y_1,y_2,\psi_1,\psi_2,z_1,z_2,\zeta_1,\zeta_2)&=(\hat{x}_1,\hat{x}_2,\hat{\chi}_1,\hat{\chi}_2,\hat{y}_1,\hat{\psi}_1,\hat{\psi}_2,\hat{y}_2,z_1,z_2,\zeta_1,\zeta_2);\\
S^{13}_{a,c}\circ S^{12}_{a,b}(x_1,x_2,\chi_1,\chi_2,y_1,y_2,\psi_1,\psi_2,z_1,z_2,\zeta_1,\zeta_2)&=(\hat{\hat{x}}_1,\hat{\hat{x}}_2,\hat{\hat{\chi}}_1,\hat{\hat{\chi}}_2,\hat{y}_1,\hat{y}_2,\hat{\psi}_1,\hat{\psi}_2,\hat{z}_1,\hat{z}_2,\hat{\zeta}_1,\hat{\zeta}_2);\\
S^{23}_{b,c}\circ S^{13}_{a,c}\circ S^{12}_{a,b}(x_1,x_2,\chi_1,\chi_2,y_1,y_2,\psi_1,\psi_2,z_1,z_2,\zeta_1,\zeta_2)&=(\hat{\hat{x}}_1,\hat{\hat{x}}_2,\hat{\hat{\chi}}_1,\hat{\hat{\chi}}_2,\hat{\hat{y}}_1,\hat{\hat{y}}_2,\hat{\hat{\psi}}_1,\hat{\hat{\psi}}_2,\hat{\hat{z}}_1,\hat{\hat{z}}_2,\hat{\hat{\zeta}}_1,\hat{\hat{\zeta}}_2).
\end{align*}

In this notation, applying the left part of the Yang-Baxter equation \eqref{SYB_eq1} to the product $\mathcal{L}_a(x_1,x_2,\chi_1,\chi_2)$ \allowbreak $\mathcal{L}_b(y_1,y_2,\psi_1,\psi_2)\mathcal{L}_c(z_1,z_2,\zeta_1,\zeta_2)$, and using \eqref{LaxreptApp} consecutively, it follows that:
\begin{align}
\mathcal{L}_c(z_1,z_2,\zeta_1,\zeta_2)&\mathcal{L}_b(y_1,y_2,\psi_1,\psi_2)\mathcal{L}_a(x_1,x_2,\chi_1,\chi_2)=\nonumber\\
&=\mathcal{L}_c(z_1,z_2,\zeta_1,\zeta_2)\mathcal{L}_a(\tilde{x}_1,y_2,\tilde{\chi}_1,\psi_2)\mathcal{L}_b(x_1,\tilde{y}_2,\chi_1,\tilde{\psi}_2)\nonumber\\
&=\mathcal{L}_a(\tilde{\tilde{x}}_1,z_2,\tilde{\tilde{\chi}}_1,\zeta_2)\mathcal{L}_c(\tilde{x}_1,\tilde{z}_2,\tilde{\chi}_1,\tilde{\zeta}_2)\mathcal{L}_b(x_1,\tilde{y}_2,\chi_1,\tilde{\psi}_2)\nonumber\\
&=\mathcal{L}_a(\tilde{\tilde{x}}_1,z_2,\tilde{\tilde{\chi}}_1,\zeta_2)\mathcal{L}_b(\tilde{x}_1,\tilde{z}_2,\tilde{\chi}_1,\tilde{\zeta}_2)\mathcal{L}_c(x_1,\tilde{\tilde{z}}_2,\chi_1,\tilde{\tilde{\zeta}}_2)\label{3facYB-1}.
\end{align}
On the other hand, applying the right side of equation \eqref{SYB_eq1}:
\begin{align}
\mathcal{L}_c(z_1,z_2,\zeta_1,\zeta_2)&\mathcal{L}_b(y_1,y_2,\psi_1,\psi_2)\mathcal{L}_a(x_1,x_2,\chi_1,\chi_2)=\nonumber\\
&=\mathcal{L}_b(\hat{y}_1,z_2,\hat{\psi}_1,\zeta_2)\mathcal{L}_c(y_1,\hat{z}_2,\psi_1,\hat{\zeta}_2)\mathcal{L}_a(x_1,x_2,\chi_1,\chi_2)\nonumber\\
&=\mathcal{L}_b(\hat{y}_1,z_2,\hat{\psi}_1,\zeta_2)\mathcal{L}_a(\hat{x}_1,\hat{z}_2,\hat{\chi}_1,\hat{\zeta}_2)\mathcal{L}_c(x_1,\hat{\hat{z}}_2,\chi_1,\hat{\hat{\zeta}}_2)\nonumber\\
&=\mathcal{L}_a(\hat{\hat{x}}_1,z_2,\hat{\hat{\chi}}_1,\zeta_2)\mathcal{L}_b(\hat{x}_1,\hat{z}_2,\hat{\chi}_1,\hat{\zeta}_2)\mathcal{L}_c(x_1,\hat{\hat{z}}_2,\chi_1,\hat{\hat{\zeta}}_2).\label{3facYB-2}
\end{align}

Comparing \eqref{3facYB-1} and \eqref{3facYB-2}, equation
\begin{equation}\label{uvwEq}
\mathcal{L}_a(\tilde{\tilde{x}}_1,z_2,\tilde{\tilde{\chi}}_1,\zeta_2)\mathcal{L}_b(\tilde{x}_1,\tilde{z}_2,\tilde{\chi}_1,\tilde{\zeta}_2)\mathcal{L}_c(x_1,\tilde{\tilde{z}}_2,\chi_1,\tilde{\tilde{\zeta}}_2)=\mathcal{L}_a(\hat{\hat{x}}_1,z_2,\hat{\hat{\chi}}_1,\zeta_2)\mathcal{L}_b(\hat{x}_1,\hat{z}_2,\hat{\chi}_1,\hat{\zeta}_2)\mathcal{L}_c(x_1,\hat{\hat{z}}_2,\chi_1,\hat{\hat{\zeta}}_2),
\end{equation}
where $\mathcal{L}_a\equiv \mathcal{L}_a(x_1,x_2,\chi_1,\chi_2)$ is given by \eqref{G-laxYBKdV}. We need to show that the matrix refactorisation problem \eqref{uvwEq} implies that
$$
\tilde{\tilde{x}}_1=\hat{\hat{x}}_1,~~ \tilde{\tilde{\chi}}_1=\hat{\hat{\chi}}_1,~~ \tilde{x}_1=\hat{x}_1,~~\tilde{z}_2=\hat{z}_2,~~\tilde{\chi}_1=\hat{\chi}_1,~~\tilde{\zeta}_2=\hat{\zeta}_2,~~\tilde{\tilde{z}}_2=\hat{\hat{z}}_2,~~\text{and}~~\tilde{\tilde{\zeta}}_2=\hat{\hat{\zeta}}_2.
$$

Indeed, equation \eqref{uvwEq} implies the following system of equations:
\begin{subequations}\label{syseqs}
\allowdisplaybreaks
\begin{align}
&\tilde{\tilde{x}}_1-\tilde{z}_2=\hat{\hat{x}}_1-\hat{z}_2,\label{syseqs-a}\\
&\tilde{x}_1-\tilde{\tilde{z}}_2=\hat{x}_1-\hat{\hat{z}}_2,\label{syseqs-b}\\
&\tilde{\chi}_1+\tilde{x}_1\chi_1=\hat{\chi}_1+\hat{x}_1\chi_1,\label{syseqs-c}\\
&\tilde{\zeta}_2-\tilde{z}_2\zeta_2=\hat{\zeta}_2-\hat{z}_2\zeta_2,\label{syseqs-d}\\ 
&\tilde{z}_2(\tilde{x}_1-z_2)-x_1\tilde{x}_1+\tilde{\chi}_1\tilde{\zeta}_2=\hat{z}_2(\hat{x}_1-z_2)-x_1\hat{x}_1+\hat{\chi}_1\hat{\zeta}_2,\label{syseqs-e}\\
&\tilde{\tilde{\chi}}_1+\tilde{\tilde{x}}_1\tilde{\chi}_1+\left[\tilde{\tilde{x}}_1(\tilde{x}_1-z_2)-a-\tilde{\tilde{\chi}}_1\zeta_2\right]\chi_1=
\hat{\hat{\chi}}_1+\hat{\hat{x}}_1\hat{\chi}_1+\left[\hat{\hat{x}}_1(\hat{x}_1-z_2)-a-\hat{\hat{\chi}}_1\zeta_2\right]\chi_1\label{syseqs-f}\\
&\tilde{\tilde{x}}_1(z_2-\tilde{x}_1)+\tilde{\tilde{\chi}}_1\zeta_2+\chi_1\tilde{\tilde{\zeta}}_2+(x_1+\tilde{\tilde{x}}_1-\tilde{z}_2)\tilde{\tilde{z}}_2=\nonumber\\
&\hat{\hat{x}}_1(z_2-\hat{x}_1)+\hat{\hat{\chi}}_1\zeta_2+\chi_1\hat{\hat{\zeta}}_2+(x_1+\hat{\hat{x}}_1-\hat{z}_2)\hat{\hat{z}}_2,\label{syseqs-g}\\
&\tilde{\tilde{x}}_1x_1(\tilde{x}_1-z_2)-\tilde{\tilde{x}}_1(b+\tilde{x}_1\tilde{z}_2+\tilde{\chi}_1\tilde{\zeta}_2)+\tilde{z}_2(a+\tilde{\tilde{x}}_1z_2+\tilde{\tilde{\chi}}_1\zeta_2)+(x_1\zeta_2+\tilde{\zeta}_2)\tilde{\tilde{\chi}}_1=\nonumber\\
&\hat{\hat{x}}_1x_1(\hat{x}_1-z_2)-\hat{\hat{x}}_1(b+\hat{x}_1\hat{z}_2+\hat{\chi}_1\hat{\zeta}_2)+\hat{z}_2(a+\hat{\hat{x}}_1z_2+\hat{\hat{\chi}}_1\zeta_2)+(x_1\zeta_2+\hat{\zeta}_2)\hat{\hat{\chi}}_1,\label{syseqs-h}\\
&(c+x_1\tilde{\tilde{z}}_2+\chi_1\tilde{\tilde{z}}_2)\left[\tilde{\tilde{x}}_1(\tilde{x}_1-z_2)-a-\tilde{\tilde{\chi}}_1\zeta_2\right]+\tilde{\tilde{x}}_1\tilde{z}_2\tilde{\tilde{z}}_2(z_2-\tilde{x}_1)\nonumber\\
&+\tilde{\tilde{z}}_2\left[\tilde{z}_2(a+\tilde{\tilde{\chi}}_1\zeta_2)-\tilde{\tilde{x}}_1(b+\tilde{\chi}_1\tilde{\zeta}_2)-\tilde{\tilde{\chi}}_1\tilde{\zeta}_2\right]+(\tilde{\tilde{\chi}}_1+\tilde{\tilde{x}}_1\tilde{\chi}_1)\tilde{\tilde{\zeta}}_2=\nonumber\\
&(c+x_1\hat{\hat{z}}_2+\chi_1\hat{\hat{\zeta}}_2)\left[\hat{\hat{x}}_1(\hat{x}_1-z_2)-a-\hat{\hat{\chi}}_1\zeta_2\right]+\hat{\hat{x}}_1\hat{z}_2\hat{\hat{z}}_2(z_2-\hat{x}_1)\nonumber\\
&+\hat{\hat{z}}_2\left[\hat{z}_2(a+\hat{\hat{\chi}}_1\zeta_2)-\hat{\hat{x}}_1(b+\hat{\chi}_1\hat{\zeta}_2)-\hat{\hat{\chi}}_1\hat{\zeta}_2\right]+(\hat{\hat{\chi}}_1+\hat{\hat{x}}_1\hat{\chi}_1)\hat{\hat{\zeta}}_2,\label{syseqs-i}\\
&\tilde{\tilde{z}}_2\left[b+(\tilde{z}_2-x_1)(\tilde{x}_1-z_2)+\tilde{\chi}_1\tilde{\zeta}_2\right]-c\tilde{x}_1+\left[(z_2-\tilde{x}_1)\chi_1-\tilde{\chi}_1\right]\tilde{\tilde{\zeta}}_2=\nonumber\\
&\hat{\hat{z}}_2\left[b+(\hat{z}_2-x_1)(\hat{x}_1-z_2)+\hat{\chi}_1\hat{\zeta}_2\right]-c\hat{x}_1+\left[(z_2-\hat{x}_1)\chi_1-\hat{\chi}_1\right]\hat{\hat{\zeta}}_2,\label{syseqs-j}\\
&\tilde{\tilde{\zeta}}_2(1+\chi_1\zeta_2)-\tilde{\tilde{z}}_2\left[\tilde{\zeta}_2+(x_1-\tilde{z}_2)\zeta_2\right]=\hat{\hat{\zeta}}_2(1+\chi_1\zeta_2)-\hat{\hat{z}}_2\left[\hat{\zeta}_2+(x_1-\hat{z}_2)\zeta_2\right].\label{syseqs-k}
\end{align}
\end{subequations}
Using equations \eqref{syseqs-a}, \eqref{syseqs-b}, \eqref{syseqs-c}, \eqref{syseqs-d}, \eqref{syseqs-e} and \eqref{syseqs-k} we express $\tilde{\tilde{x}}_1$, $\tilde{x}_1$, $\tilde{z}_2$, $\tilde{\chi}_1$, $\tilde{\zeta}_2$ and $\tilde{\tilde{\zeta}}_2$ in terms of ``$\tilde{\tilde{z}}_2-\hat{\hat{z}}_2$", as follows:
\begin{subequations}\label{termsw2}
\begin{align}
\tilde{\tilde{x}}_1&=\hat{\hat{x}}_1+\frac{x_1-\hat{z}_2+\chi_1\hat{\zeta}_2}{\hat{x}_1-z_2+\hat{\chi}_1\zeta_2+(\tilde{\tilde{z}}_2-\hat{\hat{z}}_2)(1-\chi_1\zeta_2)}(\tilde{\tilde{z}}_2-\hat{\hat{z}}_2),\\
\tilde{x}_1&=\hat{x}_1+(\tilde{\tilde{z}}_2-\hat{\hat{z}}_2),\\
\tilde{z}_2&=\frac{\hat{z}_2(\hat{x}_1-z_2+\hat{\chi}_1\zeta_2)+(\tilde{\tilde{z}}_2-\hat{\hat{z}}_2)\left[x_1+\chi_1(\hat{\zeta}_2-\hat{z}_2\zeta_2)\right]}{\hat{x}_1-z_2+\hat{\chi}_1\zeta_2+(\tilde{\tilde{z}}_2-\hat{\hat{z}}_2)(1-\chi_1\zeta_2)},\\
\tilde{\chi}_1&=\hat{\chi}_1-(\tilde{\tilde{z}}_2-\hat{\hat{z}}_2)\chi_1,\\
\tilde{\zeta}_2&=\hat{\zeta}_2+\frac{x_1-\hat{z}_2+\chi_1\hat{\zeta}_2}{\hat{x}_1-z_2+\hat{\chi}_1\zeta_2+(\tilde{\tilde{z}}_2-\hat{\hat{z}}_2)(1-\chi_1\zeta_2)}\zeta_2(\tilde{\tilde{z}}_2-\hat{\hat{z}}_2),\\
\tilde{\tilde{\zeta}}_2&=\hat{\hat{\zeta}}_2+\left[\hat{\zeta}_2(1-\chi_1\zeta_2)+(z_1-\hat{z}_2)\zeta_2\right](\tilde{\tilde{z}}_2-\hat{\hat{z}}_2).
\end{align}
\end{subequations}
Now, we substitute all the above to equation \eqref{syseqs-j} and, after a little manipulation, we can factorise \eqref{syseqs-j} in the following form:
\begin{equation}
(\tilde{\tilde{z}}_2-\hat{\hat{z}}_2)\left[b-c+(\hat{z}_2-x_1)(\hat{x}_1-z_2)(1-\zeta_2\chi_1)+(\hat{x}_1-z_2)\hat{\zeta}_2\chi_1+(x_1-\hat{z}_2)\zeta_2\hat{\chi}_1\right]=0,
\end{equation}
which implies $\tilde{\tilde{z}}_2=\hat{\hat{z}}_2$. 

Now, substituting $\tilde{\tilde{z}}_2=\hat{\hat{z}}_2$ to equations \eqref{termsw2}, we obtain $\tilde{\tilde{x}}_1=\hat{\hat{x}}_1$, $\tilde{x}_1=\hat{x}_1$, $\tilde{z}_2=\hat{z}_2$, $\tilde{\chi}_1=\hat{\chi}_1$, $\tilde{\zeta}_2=\hat{\zeta}_2$ and $\tilde{\tilde{\zeta}}_2=\hat{\hat{\zeta}}_2$.  Finally, equation \eqref{syseqs-i} implies $\tilde{\tilde{\chi}}_1=\hat{\hat{\chi}}_1$. Therefore, according to the trifactorisation property, map $\{\eqref{GmapYB}-\eqref{superYB}\}$ satisfies the parametric Yang-Baxter equation \eqref{SYB_eq1}.

Regarding the invariants, the supertrace of the quantity $\mathcal{L}_b(\textbf{y},\pmb{\psi})\mathcal{L}_a(\textbf{x},\pmb{\chi})$ reads:
\begin{equation*}
\str(\mathcal{L}_b(\textbf{y},\pmb{\psi})\mathcal{L}_a(\textbf{x},\pmb{\chi}))=2\lambda+(y_1-x_2)(x_1-y_2)-\chi_1\chi_2-\psi_1\psi_2-a-b-1.
\end{equation*}
Thus, $I=(y_1-x_2)(x_1-y_2)-\chi_1\chi_2-\psi_1\psi_2-a-b-1=I_1-I_2-a-b-1$ is invariant of the map $\{\eqref{GmapYB}-\eqref{superYB}\}$, where $I_1$ and $I_2$ are given by \eqref{sinvYB-a} and \eqref{sinvYB-b}, respectively. However, it can be readily verified that $I_1$ and $I_2$ are invariants themselves. In addition, it can be verified by straightforward calculation that $I_3$ and $I_4$ in \eqref{sinvYB-c} and \eqref{sinvYB-d}, respectively, are also invariants of the map $\{\eqref{GmapYB}-\eqref{superYB}\}$.

For the bosonic limit, we set $\chi_1=\chi_2=\psi_1=\psi_2=0$ in \eqref{superYB}, which implies $\xi_1=\xi_2=\eta_1=\eta_2=0$. Thus, the retrieved map coincides with map \eqref{YBliftKdV}.


\begin{thebibliography}{10}
\bibitem{ABS-2004}
{\sc Adler V., Bobenko A. and Suris Y.:} {2003} {Classification of
  integrable equations on quad-graphs. {T}he consistency approach} {\em Comm. Math.
  Phys.} {\textbf{233}} {513--543}.

\bibitem{ABS-2005}
{\sc Adler V, Bobenko A, and Suris Y} {2004} { Geometry of
  {Y}ang-{B}axter maps: pencils of conics and quadrirational mappings} {\em Comm.
  Anal. Geom.} {\textbf{12}} {967--1007}.
	
\bibitem{Berezin}
{\sc Berezin F} {1987} {Intoduction to superanalysis}
  {\em Reidel Publishing Co. Dordrecht}  {\textbf{9}}.
	
\bibitem{Bobenko-Suris}
{\sc Bobenko A and Suris Y} {2002} {Integrable systems on quad-graphs}
  {\em Int. Math. Res. Notices} {\textbf{11}} {573--611}.

\bibitem{Brid}  
{\sc Bridgman T, Hereman W, Quispel G R W and van der Kamp P H} {2013} Symbolic computation of Lax pairs of partial difference equations using consistency around the cube 
{\em Found. Comput. Math.} {\textbf{13}} 517--544.

\bibitem{Buchstaber}
{\sc Buchstaber V} {1998} {The Yang-Baxter transformation}
  {\em Russ. Math. Surveys} {\textbf{53:6}} {1343--1345}.
	

\bibitem{Doliwa}
{\sc Doliwa A} {2014} {Non-commutative rational Yang-Baxter maps} {\em Lett. Math. Phys} {\textbf{104}} {299--309}.	
	
\bibitem{Drinfel'd}
{\sc Drinfel'd V} {1992} {On some unsolved problems in quantum group
  theory} {\em Lecture Notes in Math.} {\textbf{1510}} {1--8}.

\bibitem{Etingof}{\sc Etingof P, Schedler T and Soloviev A} {1999} {Set-theoretical solutions to the quantum Yang-Baxter equation} {\em Duke Math. J.} {\textbf{100:2}} {169--209}.

	
\bibitem{GKM}
{\sc Grahovski G,  Konstantinou-Rizos S and Mikhailov A} {2016} {Grassmann extensions of {Y}ang-{B}axter maps} {\em J. Phys. A} {\textbf{49}} {145202}.
	
\bibitem{Georgi-Sasha}
{\sc Grahovski G and Mikhailov A} {2013} {Integrable discretisations for a class of nonlinear Scr\"odinger equations on {G}rassmann algebras} {\em Phys. Lett. A} {\textbf{377}} {3254--3259}.

\bibitem{Hirota-KdV}
{\sc Hirota R} {1977} {Nonlinear partial difference equations. I. A difference analog of the Korteweg-de Vries equation} {\em J. Phys. Soc. Jpn} {\textbf{43}} {1423--1433}.

\bibitem{Pavlos}
{\sc Kassotakis P and Nieszporski M} {2012} {On non-multiaffine consistent-around-the-cube lattice equations} {\em Phys. Lett. A} {\textbf{376}} {3135--3140}.

\bibitem{Sokor-Sasha}
{\sc Konstantinou-Rizos S and Mikhailov A} {2013} {Darboux transformations, finite reduction groups and related Yang-Baxter maps} {\em J. Phys. A} {\textbf{46}} {425201}.
	
\bibitem{Sokor-Sasha-2}
{\sc Konstantinou-Rizos S and Mikhailov A} {2016} {Anticommutative extension of the Adler map} {\em J. Phys. A: Math. Theor.} {\textbf{49}} {30LT03}.

\bibitem{kouloukas}
{\sc Kouloukas T E and Papageorgiou V G} {2009} {Yang--{B}axter maps with first-degree-polynomial 2$\times$ 2 {L}ax matrices} {\em J. Phys. A: Math. Theor.} {\textbf{42}} {404012}.

\bibitem{kp2}
 {\sc Kouloukas T E and  Papageorgiou V G} 2011 Entwining
Yang-Baxter maps and integrable lattices {\it Banach Center Publ.}  {\bf 93}
163--175.
	
	\bibitem{Kouloukas2}
{\sc Kouloukas T E and Papageorgiou V G} {2011} {Poisson {Y}ang-{B}axter
  maps with binomial {L}ax matrices} {\em J. Math. Phys.} {\textbf{52}} {073502}.
	
	
	\bibitem{Kouloukas-Dihn}
{\sc Kouloukas T E and Tran D} {2015} {Poisson structures for lifts and periodic reductions of integrable lattice equations} {\em J. Phys. A: Math. Theor.} {\textbf{48}} {075202}.

\bibitem{miky}{\sc Mikhailov A, Papamikos G and Wang J-P} {2016} {Darboux Transformation for the Vector sine-Gordon Equation and Integrable Equations on a Sphere} {\em Lett. Math. Phys.}
 {\textbf{106:7}} {973--996}

\bibitem{NiPCQ}
{\sc Nijhoff F,  Papageorgiou VG, Capel H W and Quispel G R W} {1992} The lattice Gelfand-Dikii hierarchy  {\em Inv. Probl.} {\textbf{8}}  597--621.
	
	\bibitem{Frank}
{\sc Nijhoff F and Capel H} {1995} {The discrete {K}orteweg-de {V}ries equation} {\em Acta Appl. Math.} {\textbf{39}} {133--158}.

	\bibitem{Frank4}
{\sc Nijhoff F and Walker A} {2001} {The discrete and continuous {P}ainlev{\'e} VI hierarchy and the {G}arnier systems} {\em Glasgow Math. J.}{\textbf{43}} {109--123}.

\bibitem{PNC}
{\sc Papageorgiou V G, Nijhoff F W and  Capel H W} 1990 Integrable mappings and nonlinear integrable lattice equations  {\it {Phys. Lett. A}}  {\bf 147}
106--114.
 
 \bibitem{Pap-Tongas0}
{\sc Papageorgiou V G and Tongas A G} {2007} {Yang--Baxter maps and multi-field integrable lattice equations} {\em J. Phys. A: Math. Theor.} {\textbf{40}} {12677}.
 
\bibitem{Pap-Tongas}
{\sc Papageorgiou V G and Tongas A G} {2009} {{Y}ang-{B}axter maps associated to elliptic curves} {\em arXiv:0906.3258v1}.

\bibitem{Pap-Tongas-Veselov}
{\sc Papageorgiou V G, Tongas A G, Veselov A P } {2006} {Yang--Baxter maps and symmetries of integrable equations on quad-graphs} {\em J. Math. Phys.}  {\textbf{47}}, 083502. 

\bibitem{skly1}
{\sc Sklyanin E} {1982} {Some algebraic structures connected with the
Yang-Baxter equation} {\it Funct. Anal. Appl.} {\bf 16}  no. 4,  263--270.

\bibitem{Sklyanin}
{\sc Sklyanin E} {1988} {Classical limits of $SU(2)$-invariant solutions of the {Y}ang-{B}axter equation},
  {\em J. Soviet Math.} {\textbf{40}}(1) {93--107}.

\bibitem{skly2}
{\sc Sklyanin E} {1995} {Separation of Variables, New Trends} {\it Progr. Theor. Phys. Supplement} {\bf 118}  35--60.

\bibitem{Tongas-Frank}
{\sc Tongas A and Nijhoff F} {2005} {The Boussinesq integrable system. Compatible lattice and continuum structures} {\em Glasgow Math. J.} {\textbf{47}} {205--219}. 

\bibitem{Veselov2}
{\sc Suris Y and Veselov A} {2003} {Lax matrices for {Y}ang-{B}axter
  maps} {\em J. Nonlinear Math. Phys.} {\textbf{10}} {223--230}.  
	
\bibitem{Veselov}
{\sc Veselov A} {2003} {Yang-{B}axter maps and integrable dynamics} {\em Phys.
  Lett. A} {\textbf{314}} {no. 3} {214--221}.
	
\bibitem{Veselov3}
{\sc Veselov A} {2007} {Yang-{B}axter maps: dynamical point of view} {\em J. Math. Soc. Japan} {\textbf{17}} {145--167}.


\bibitem{Xue-Levi-Liu}{\sc Xue L-L, Levi D and Liu Q-P} {2013} {Supersymmetric {K}d{V} equation: {D}arboux transformation and discrete systems} {\em J. Phys. A: Math. Theor. (FTC)}
  {\textbf{46}} {502001}.

\bibitem{Xue-Liu}{\sc Xue L-L and  Liu Q-P} {2014} {B{\"a}cklund--{D}arboux transformations and discretizations of super {K}d{V} equation} {\em SIGMA}
  {\textbf{10}}.

\bibitem{Xue-Liu-2}{\sc Xue L-L and  Liu Q-P} {2015} {A supersymmetric AKNS problem and its Darboux-B\"acklund transformations and discrete systems} {\em Stud. Appl. Math.}
  {\textbf{135}} {35--62}.



\end{thebibliography}
\end{document}